\documentclass[journal]{IEEEtran}

\usepackage{cite}
\usepackage{amsmath,amssymb,amsfonts}
\usepackage{algorithmic}
\usepackage{algorithm}
\usepackage{graphicx}
\usepackage{textcomp}
\usepackage{xcolor}
\usepackage{booktabs}
\usepackage{multirow}
\usepackage{siunitx}
\usepackage{enumitem}
\usepackage{makecell}
\usepackage[caption=false,font=footnotesize]{subfig}
\usepackage{url}
\usepackage[hidelinks]{hyperref}
\usepackage{amsmath, amssymb, amsthm}
\usepackage{amsmath} 
\usepackage{amssymb}  

\usepackage{todonotes} 

\usepackage{algorithm}
\usepackage{algorithmic}
\allowdisplaybreaks

\theoremstyle{plain}
\newtheorem{theorem}{Theorem}

\theoremstyle{definition}
\newtheorem{assumption}{Assumption}

\theoremstyle{remark}
\newtheorem{remark}{Remark}

\title{\LARGE \bf
End-to-End Battery Dispatch with Exact Rainflow Degradation via Mixed-Integer Differentiable Predictive Control
}
\author{
Eshagh Safarzadeh Ravajiri, 
J\'an Drgo\v na, 
Mahdi Mehrtash,
and Benjamin F. Hobbs%

\thanks{Code available at \url{https://github.com/SOLARIS-JHU/Battery-MI-DPC.git}.}%

\thanks{E.~Safarzadeh Ravajiri and B.~F.~Hobbs are with the Department of Environmental Health and Engineering, The Johns Hopkins University, Baltimore, MD 21218, USA (e-mail: esafarz1@jh.edu; bhobbs@jhu.edu).
J.~Drgo\v{n}a is with the Department of Civil and Systems Engineering, The Johns Hopkins University, Baltimore, MD 21218, USA (e-mail: jdrgona1@jh.edu).
M.~Mehrtash is with the Department of Electrical and Biomedical Engineering, University of Nevada, Reno, NV 89557, USA (e-mail: mahdi.mehrtash@ieee.org).
}
\thanks{(Corresponding authors: E.~Safarzadeh Ravajiri; M.~Mehrtash.)}
}

\begin{document}
\maketitle


\begin{abstract}
Optimal dispatch of battery energy storage systems requires balancing energy arbitrage against cycle-induced degradation, which is accurately quantified through rainflow cycle counting. However, rainflow's combinatorial, nondifferentiable algorithm is incompatible with both convex optimization and gradient-based neural network training. We present a self-supervised mixed-integer differentiable predictive control framework that trains neural policies directly on exact rainflow degradation through a novel differentiable rainflow layer combining exact gradients at state-of-charge extrema with dense proxy gradients on incremental changes, enabling stable end-to-end training while preserving true degradation physics. A mixed-integer differentiable architecture enforces power balance, dynamics, and mode exclusivity, with a safety filter guaranteeing constraint satisfaction. We evaluate the framework on 3,650 real battery-day scenarios (a 10-battery fleet over 365 days) spanning three utility regions (SDG\&E California, Xcel Energy Colorado, APS Arizona) with diverse time-of-use pricing structures. A single-battery trained policy achieves a 0.35\% performance gap on its training distribution with over 200$\times$ speedup, while fleet-wide training generalizes across all households and utility regions, achieving a 3.33\% performance gap with 564$\times$ computational speedup (62 seconds vs.\ 9.7 hours) and 100\% feasibility. The millisecond-scale inference reduces computation by over two orders of magnitude compared to mixed-integer solvers, enabling practical deployment at scale.
\end{abstract}

\begin{IEEEkeywords}
Battery energy storage systems, differentiable predictive control, rainflow cycle counting, battery degradation, mixed-integer optimization, neural network policy.
\end{IEEEkeywords}


\section{Introduction}
\IEEEPARstart{B}{attery} energy storage systems (BESS) are emerging as key flexibility resources in modern power systems, enabling energy arbitrage, peak shaving, frequency regulation, and renewable integration \cite{sakib2025_role_bess_rez}. Effective BESS dispatch must exploit price signals while maintaining reliability and ensuring long-term economic value under realistic degradation and warranty constraints \cite{abdulla2018_optimal_ess_sdp}, placing emphasis on control methods that translate forecasts of load, generation, and prices into economically efficient, physically feasible charging and discharging schedules.
A defining feature of BESS is that economic performance is tightly coupled to degradation. Lithium-ion batteries exhibit capacity fade, often dominated by cycle aging in actively operated systems \cite{lam2025_calendar_aging_joule}. Cycle-induced degradation depends nonlinearly on cycle depth where deeper cycles cause disproportionately greater damage than shallower cycles with equivalent energy throughput \cite{xu2018_liion_degradation_model}. Optimal dispatch must therefore manage battery cycles, as naively optimizing for short-term arbitrage can lead to aggressive cycling, resulting in rapid erosion of battery capacity \cite{wankmueller2017_degradation_arbitrage}.

Degradation-aware models decompose the state of charge (SOC) trajectory into equivalent cycles using depth-dependent stress functions calibrated from laboratory data \cite{collath2022_aging_aware_operation_review}. The industry-standard tool is rainflow counting, developed for structural fatigue analysis \cite{astm2023_e1049} and now standard in battery research \cite{endo1968_rainflow,vetter2005_ageing_liion_full}. The algorithm maintains a stack-based memory of SOC peaks and valleys, iteratively closing cycles as new extrema are encountered \cite{xu2018_liion_degradation_model}. This process is sequential, combinatorial, and nondifferentiable, incompatible with convex optimization and gradient-based learning. Combined with binary mode variables and bilinear power flow terms, this yields a mixed-integer nonlinear program (MINLP) requiring specialized solvers that may take hours or fail to converge.

In parallel, learning-based approaches have emerged as promising alternatives to classical optimization. Learning-to-optimize (L2O) methods train neural networks to approximate the solution map of parametric optimization problems, generating near-optimal decisions in milliseconds with computational speedups of several orders of magnitude~\cite{khaloie2025_ml_opf_review,singh2022_sensitivity_dnn_acopf,chen2024_e2e_ed}. A complementary direction is Differentiable Predictive Control (DPC)~\cite{drgona2024dpc_guarantees}, which embeds system dynamics and operational constraints directly into the neural architecture, enabling self-supervised end-to-end training without requiring pre-solved optimal datasets. Recent works has extended DPC to mixed-integer settings~\cite{boldocky2025mixedintegerdpc,boldocky2025dpc}.

Learning-to-optimize and end-to-end differentiable-optimization methods have been applied to economic dispatch and storage scheduling, and have even been equipped with differentiable feasibility-repair layers and self-supervised training \cite{chen2024_e2e_ed}. Despite these advances, learning-based methods have largely avoided the core challenge of degradation-aware dispatch. Existing approaches either incorporate degradation via linear throughput penalties that do not capture nonlinear cycle-depth effects \cite{das2024adp_der_coordination}, approximate cycle-based degradation using simplified linearized models instead of full rainflow counting \cite{kwon2022rl_cycle_degradation}, or rely on reward shaping that guides the agent with surrogate degradation costs rather than true nonlinear degradation \cite{cao2020drl_battery_arbitrage}. In addition, reinforcement learning (RL) approaches to storage dispatch more broadly treat degradation as a black-box reward signal, discarding the known structure of the dispatch problem \cite{bui2020ddqn_bess, huang2021drl_pv_battery}, and struggle with strict constraint enforcement \cite{hou2023constraint_drl_ess}. The common obstacle across all of these directions is that the rainflow algorithm itself is nondifferentiable and combinatorial, which prevents its direct integration into the gradient-based training on which learning-based methods depend, and thereby precludes optimizing over true cycle-aware degradation.

This paper directly addresses this gap. We develop, to our knowledge, the first self-supervised framework that trains neural dispatch policies end-to-end on exact rainflow degradation. Our key contributions are as follows:

\begin{enumerate}[label=\arabic*.]
\item A differentiable rainflow layer that resolves the fundamental incompatibility between rainflow cycle counting and gradient-based learning.
\item A mixed-integer differentiable predictive control (MI-DPC) architecture that enforces power balance, dynamics, and mode exclusivity architecturally, with a convex safety projection filter whose well-posedness and trajectory feasibility are formally proven independent of network output quality.
\item Validation on a 10-battery fleet spanning three utility regions, SDG\&E (California), Xcel Energy (Colorado), and APS (Arizona), under diverse load and time-of-use pricing structures.
\end{enumerate}

Overall, the proposed framework provides a fast inference solver for the battery dispatch problem with degradation-aware optimization, significantly reducing the computational time required by traditional optimization approaches.

\section{Problem Formulation}
\label{sec:problem}

\subsection{Optimization Problem}

We formulate the residential BESS scheduling problem as a constrained optimization that minimizes total operational cost, comprising energy arbitrage and battery degradation, over a 24-hour planning horizon. The system consists of a grid-connected BESS co-located with a solar PV array under net metering. While formulated for day-ahead scheduling, this approach can naturally extends to receding horizon frameworks with periodic re-optimization at shorter intervals.

Given forecasted exogenous parameters $\xi = \{s_t, d_t, \pi^{\text{buy}}_t, \pi^{\text{sell}}_t\}_{t=0}^{T-1}$ representing solar generation [kW], electrical demand [kW], purchase price [\$/kWh], and sell price [\$/kWh] respectively, we determine the optimal battery charge/discharge schedule. The planning horizon $T = 24$ hours is discretized into $N$ time steps with uniform interval $\Delta t$ (15 minutes with $N = 96$).

The optimization problem is formulated as:
\vspace{-2mm}
\begin{equation}
\begin{aligned}
\min_{\mathbf{x}} \quad J(\mathbf{x}, \xi) = \sum_{t=0}^{T-1} &\left( p_t^{\text{import}} \pi_t^{\text{buy}} - p_t^{\text{export}} \pi_t^{\text{sell}} \right) \Delta t \\
&+ C_{\text{cyc}}(\text{SOC}_{0:T})
\end{aligned}
\label{eq:optimization_problem}
\end{equation}
\vspace{-2mm}

\vspace{-2mm}
\begin{subequations}
\label{eq:constraints}
\begingroup
\setlength{\jot}{8pt}
\begin{align}
\quad \text{s.t.} \quad
&z_t^{\text{ch}} + z_t^{\text{dis}} + z_t^{\text{idle}} = 1, && \forall t \label{eq:mode_ex} \\
&p_t^{\text{ch}} \leq P_{\text{max}} \cdot z_t^{\text{ch}}, && \forall t \label{eq:ch_couple} \\
&p_t^{\text{dis}} \leq P_{\text{max}} \cdot z_t^{\text{dis}}, && \forall t \label{eq:dis_couple} \\
&\text{SOC}_{t+1} = \text{SOC}_t + \left( \eta_{\text{ch}} p_t^{\text{ch}} - \frac{p_t^{\text{dis}}}{\eta_{\text{dis}}} \right) \Delta t, && \forall t \label{eq:soc_dyn} \\
&0.1 E_{\text{cap}} \leq \text{SOC}_t \leq 0.9 E_{\text{cap}}, && \forall t \label{eq:soc_lim} \\
&\text{SOC}_0 = \text{SOC}_T = 0.5 E_{\text{cap}} \label{eq:terminal} \\
&p_t^{\text{import}} - p_t^{\text{export}} = d_t - s_t + p_t^{\text{ch}} - p_t^{\text{dis}}, && \forall t \label{eq:power_bal}
\end{align}
\endgroup
\end{subequations}
\vspace{-4mm}

Decision variables include discrete operational modes $\mathbf{z}_t = [z_t^{\text{ch}}, z_t^{\text{dis}}, z_t^{\text{idle}}]^T \in \{0,1\}^3$, continuous battery power flows $p_t^{\text{ch}}, p_t^{\text{dis}} \in [0, P_{\text{max}}]$, state-of-charge $\text{SOC}_t \in [0.1 E_{\text{cap}}, 0.9 E_{\text{cap}}]$, and grid exchange $p_t^{\text{import}}, p_t^{\text{export}} \in \mathbb{R}_{\geq 0}$ at each time $t$.

This formulation constitutes a mixed-integer nonlinear program due to (i) discrete mode selection variables $\mathbf{z}_t$, (ii) bilinear coupling constraints~\eqref{eq:ch_couple}--\eqref{eq:dis_couple}, and (iii) the nonlinear, nondifferentiable degradation cost $C_{\text{cyc}}$. The optimal solution $\mathbf{x}^*$ balances energy arbitrage opportunities against battery degradation, yielding dispatch strategies that capture profitable price differentials while extending battery lifespan.

\subsection{Cycle Degradation Model}

Battery capacity fade during active operation is dominated by cycle aging. We model degradation using rainflow cycle counting combined with experimentally-derived stress functions. Accelerated aging tests establish that capacity loss per cycle exhibits nonlinear dependence on cycle depth. Following Xu et al.~\cite{xu2018_cycle_aging_market}, the stress function for a cycle of normalized depth $\delta \in [0,1]$ is:
\begin{equation}
\Phi(\delta) = a \cdot \delta^b
\label{eq:stress_function}
\end{equation}
where $a = 5.24 \times 10^{-4}$ and $b = 2.03$ are parameters for Li(NiMnCo)O$_2$ chemistry. The superlinear exponent $b > 1$ captures the critical nonlinearity: a single 50\% depth-of-discharge cycle causes more than twice the damage of two 25\% cycles.

Real-world battery operation involves irregular SOC trajectories with complex cycling patterns. The rainflow counting algorithm~\cite{endo1968_rainflow}, adopted from mechanical fatigue analysis, decomposes arbitrary profiles into equivalent cycles by identifying closed hysteresis loops, extracting cycle depths $\{\delta_i\}$ with counts $\{n_i\}$ where $n_i \in \{0.5, 1.0\}$ for half-cycles and full cycles. The cumulative capacity loss is converted to economic cost through:
\begin{equation}
C_{\text{cyc}}(\text{SOC}_{0:T}) = \frac{R}{\eta_{\text{dis}}} \sum_{i=1}^{N_{\text{cyc}}} n_i \cdot \Phi(\delta_i)
\label{eq:degradation_cost}
\end{equation}
where $R/\eta_{\text{dis}}$ accounts for reduced energy throughput as the battery degrades. This quantifies the marginal degradation cost incurred by any schedule, enabling direct economic comparison with energy arbitrage revenue. This pricing couples the slow aging process to the fast dispatch horizon, so that a schedule re-solved each day limits cumulative long-horizon wear without an explicit multi-year plan. Because $C_{\text{cyc}}$ is expressed in the same monetary units as the energy cost through the battery replacement cost, the objective is a true total cost rather than a weighted sum, and no relative weighting between arbitrage and aging need be tuned. The key computational challenge is that $C_{\text{cyc}}$ is a complex, nonconvex, and nondifferentiable function of the SOC trajectory due to the combinatorial nature of rainflow counting.

\subsection{Piecewise Linear Approximation Baseline}
\label{sec:pwl_method}
The Piecewise Linear approximation method~\cite{xu2018_cycle_aging_market} serves as our primary baseline. It addresses computational intractability by replacing the nonlinear, nondifferentiable $C_{\text{cyc}}$ with a convex linear surrogate. The battery capacity is partitioned into $J$ equal segments (typically $J=16$), each assigned a constant marginal degradation cost. The method introduces segment-specific variables: $p_{t,j}^{\text{ch}}, p_{t,j}^{\text{dis}} \geq 0$: power to/from segment $j$ at time $t$, $e_{t,j} \in [0, E_{\text{cap}}/J]$: energy in segment $j$ at time $t$, and
$v_t \in \{0,1\}$: discharge mode indicator. The optimization problem is formulated as:
\begin{equation}
\min \; \sum_t \left( p_t^{\text{imp}} \pi_t^{\text{buy}} - p_t^{\text{exp}} \pi_t^{\text{sell}} + \sum_{j=1}^{J} C_{\text{dis}}[j]\, p_{t,j}^{\text{dis}} \right)\Delta t
\label{eq:pwl_objective}
\end{equation}

\begin{subequations}
\label{eq:pwl_constraints}
\begingroup
\setlength{\jot}{8pt}
\begin{align}
\text{s.t.}\quad
& C_{\text{dis}}[j]
= \frac{R J}{\eta_{\text{dis}} E_{\text{cap}}}
\bigg[\Phi\!\left(\frac{j}{J}\right)-\Phi\!\left(\frac{j-1}{J}\right)\bigg],
&& \forall j \label{eq:pwl_marginal_cost} \\
& p_t^{\text{ch}} = \sum_{j=1}^{J} p_{t,j}^{\text{ch}}, \quad
  p_t^{\text{dis}} = \sum_{j=1}^{J} p_{t,j}^{\text{dis}},
&& \forall t \label{eq:pwl_agg_ch} \\
& e_{t+1,j} = e_{t,j} + \eta_{\text{ch}} p_{t,j}^{\text{ch}} \Delta t 
- \frac{p_{t,j}^{\text{dis}}}{\eta_{\text{dis}}} \Delta t \quad (\forall t,j),  \label{eq:pwl_dynamics} \\
& 0 \le e_{t,j} \le \frac{E_{\text{cap}}}{J},
\quad (\forall t,j), \label{eq:pwl_segment_limits} \\
& 0.1\,E_{\text{cap}}
\le \sum_{j=1}^{J} e_{t,j}
\le 0.9\,E_{\text{cap}},
&& \forall t \label{eq:pwl_soc_band} \\
& \sum_{j=1}^{J} e_{0,j} = \text{SOC}_0, \quad
  \sum_{j=1}^{J} e_{T,j} = \text{SOC}_T
\label{eq:pwl_terminal} \\
& p_t^{\text{ch}} \le P_{\text{max}}(1-v_t), \quad
  p_t^{\text{dis}} \le P_{\text{max}}v_t,
&& \forall t \label{eq:pwl_ch_mode} \\
& p_t^{\text{import}} - p_t^{\text{export}}
= d_t - s_t + p_t^{\text{ch}} - p_t^{\text{dis}},
&& \forall t \label{eq:pwl_balance}
\end{align}
\endgroup
\end{subequations}

The marginal degradation cost per kWh~\eqref{eq:pwl_marginal_cost} discretizes the stress function difference at segment boundaries. Convexity of $\Phi$ ($b > 1$) ensures $C_{\text{dis}}[1] < \cdots < C_{\text{dis}}[J]$, incentivizing shallow cycling. Initial segment energies are filled sequentially from $j=1$ upward until $\text{SOC}_0$ is distributed.

\subsection{Additional Benchmark Methods}
\label{sec:comparison_methods}
We benchmark our proposed method against two additional methods that are not part of the proposed framework. First, as a naive degradation-aware baseline, we replace the degradation term with a flat penalty on discharged energy, $C_{\text{thr}} = c_{\text{thr}} \sum_{t} p_t^{\text{dis}}\,\Delta t$, where $c_{\text{thr}}$~[\$/kWh] amortizes the battery replacement cost over its rated lifetime energy. All other constraints match the PWL formulation, giving a MILP solved by Gurobi. Being linear in throughput, this penalty charges shallow and deep cycles equally, whereas the true stress function $\Phi(\delta)\propto\delta^{2.03}$ penalizes deep cycles superlinearly. Second, since the exact MINLP with rainflow-based degradation is intractable for standard solvers, we employ differential evolution~\cite{storn1997differential} with penalty-based constraint handling to obtain near-optimal solutions on representative test cases.

\section{Methodology: Mixed-Integer Differentiable Predictive Control}
\label{sec:methods}

Figure~\ref{fig:architecture} provides an overview of the complete MI-DPC framework, and Algorithm~\ref{alg:l2o_training} details the architecture and training procedure. The architecture consists of three main components, (i) a neural policy network that encodes inputs and predicts control decisions through differentiable mode selection and power heads, (ii) forward simulation of battery dynamics with exact rainflow degradation evaluation, (iii) a differentiable layer enabling gradient-based training despite nondifferentiable rainflow counting. An optional convex projection layer applied only at inference guarantees constraint satisfaction. The framework integrates several established techniques from the literature, Gumbel-Softmax relaxation for discrete decisions\cite{jang2016gumbel_softmax}, curriculum learning for weight scheduling\cite{tang2025l2o_minlp}, and convex projection for feasibility, whose theoretical properties have been rigorously analyzed in prior works, into a cohesive architecture that, for the first time, enables gradient-based training on exact rainflow degradation.

\begin{algorithm}[t]
\caption{MI-DPC Training with Differentiable Rainflow Layer}
\label{alg:l2o_training}
\begin{algorithmic}[1]
\REQUIRE Training data $\mathcal{D} = \{(\mathbf{s}^{(i)}, \mathbf{d}^{(i)}, \boldsymbol{\pi}^{(i)})\}_{i=1}^N$ (solar, demand, prices)
\REQUIRE Total epochs $E$, curriculum schedule $w_{\text{deg}}(e)$
\ENSURE Trained policy network parameters $\theta^*$

\STATE Initialize network $\pi_\theta$ and optimizer

\FOR{epoch $e = 1$ to $E$}
    \STATE Update degradation weight: $w_{\text{deg}} \gets w_{\text{deg}}(e)$
    
    \FOR{batch $(\mathbf{s}, \mathbf{d}, \boldsymbol{\pi}) \sim \mathcal{D}$}
        
        \STATE \textbf{// Forward Pass}
        \STATE Encode inputs: $\mathbf{h} \gets \textsc{Encoder}_\theta(\mathbf{s}, \mathbf{d}, \boldsymbol{\pi})$
        \STATE Predict modes: $\mathbf{z}_t \gets \textsc{GumbelSoftmax}(\mathbf{h}_t)$ for $t = 0, \ldots, T-1$
        \STATE Predict power: $p_t^{\text{ch}}, p_t^{\text{dis}} \gets \textsc{PowerHead}_\theta(\mathbf{h}_t, \mathbf{z}_t)$
        \STATE Compute SOC: $\text{SOC}_t \gets f(\text{SOC}_{t-1}, p_t^{\text{ch}}, p_t^{\text{dis}})$ \COMMENT{Dynamics unroll}
        \STATE Compute grid exchange: $p_t^{\text{grid}} \gets d_t - s_t + p_t^{\text{ch}} - p_t^{\text{dis}}$
        
        \STATE \textbf{// Loss Computation}
        \STATE Economic cost: $\mathcal{L}_{\text{econ}} \gets \sum_t \text{cost}(p_t^{\text{grid}}, \boldsymbol{\pi}_t)$
        \STATE Degradation cost: $\mathcal{L}_{\text{deg}} \gets \textsc{ExactRainflow}(\text{SOC}_{0:T})$ \COMMENT{Exact rainflow in forward}
        \STATE Total loss: $\mathcal{L} \gets \mathcal{L}_{\text{econ}} + w_{\text{deg}} \mathcal{L}_{\text{deg}} + \mathcal{L}_{\text{penalties}}$
        
        \STATE \textbf{// Backward Pass (Differentiable Rainflow Layer)}
        \STATE Compute exact gradients at SOC extrema: $\nabla_{\text{exact}}$
        \STATE Compute proxy gradients between extrema: $\nabla_{\text{proxy}}$
        \STATE Combine: $\frac{\partial \mathcal{L}}{\partial \text{SOC}} \gets \alpha \nabla_{\text{exact}} + (1-\alpha) \nabla_{\text{proxy}}$
        \STATE Backpropagate through dynamics and network: $\nabla_\theta \mathcal{L}$
        
        \STATE Update parameters: $\theta \gets \textsc{Optimizer}(\theta, \nabla_\theta)$
        
    \ENDFOR
\ENDFOR

\RETURN Trained policy $\theta^*$
\end{algorithmic}
\end{algorithm}

\begin{figure*}[t]
\centering
\includegraphics[width=\linewidth]{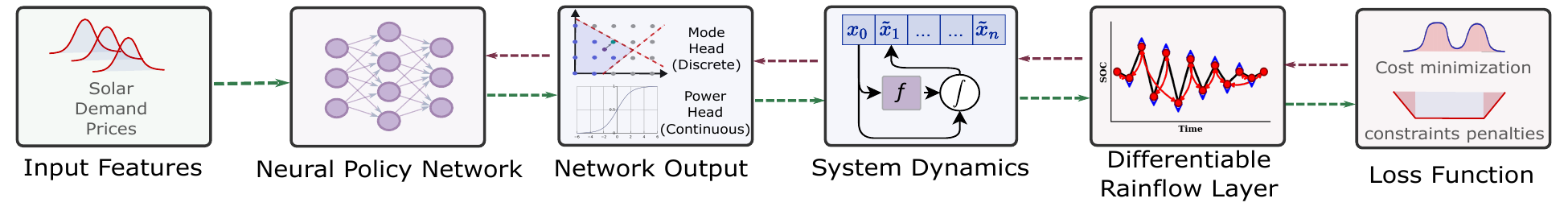}
\caption{MI-DPC architecture. Solar, demand, and price inputs are encoded by a neural policy with discrete mode and continuous power heads. SOC dynamics are unrolled deterministically through the differentiable rainflow layer, and gradients flow end-to-end from the combined economic-degradation loss back through the entire pipeline.}
\label{fig:architecture}
\end{figure*}

\subsection{Learning to Optimize}
\label{sec:l2o_background}
Learning-to-optimize replaces the repeated solution of a parametric optimization problem with the evaluation of a learned map~\cite{chen2022l2o_primer, amos2017optnet}. Rather than solving an optimization problem from scratch for every new set of inputs, one trains a neural policy once to map those inputs directly to the corresponding decisions, so that each new instance is solved by a single forward pass. Training may be supervised, using precomputed optimal solutions as labels, or self-supervised, directly minimizing the problem's own objective~\cite{donti2021dc3}; the latter needs no labels but requires that objective to be differentiable, which is precisely what the nondifferentiability of rainflow degradation obstructs and what the differentiable layer of this section restores. The approach is most attractive when a structurally identical problem is solved repeatedly with only its parameters changing. This is exactly the setting of battery dispatch, where the same program is solved each day, for each battery, with only the forecasts differing.

\subsection{Framework Overview}

The MI-DPC trains a neural network to directly predict near-optimal solutions from problem parameters, amortizing optimization effort through offline training to achieve constant-time $\mathcal{O}(1)$ inference. We reformulate the battery dispatch optimization~\eqref{eq:optimization_problem} as a parametric learning problem where the network predictions define a differentiable computational graph.

The battery dispatch problem is recast as:
\begin{equation}
\begin{aligned}
\min_{\theta} \quad \mathbb{E}_{\xi \sim \mathcal{D}} \Bigg[ 
&\sum_{t=0}^{T-1} \left(p_t^{\text{import}} \pi_t^{\text{buy}} - p_t^{\text{export}} \pi_t^{\text{sell}}\right) \Delta t \\
&+ w_{\text{deg}} \cdot C_{\text{cyc}}(\text{SOC}_{0:T}) \\
&+ \lambda_{\text{SOC}} \mathcal{L}_{\text{SOC}} + \lambda_{\text{term}} \mathcal{L}_{\text{term}} \Bigg]
\end{aligned}
\label{eq:l2o_formulation}
\end{equation}

\begin{subequations}
\begin{align}
\text{s.t.}\quad
&\text{SOC}_{t+1} = f(\text{SOC}_t, p_t^{\text{ch}}, p_t^{\text{dis}}, \Delta t), \quad \forall t \label{eq:l2o_dynamics}\\
&z_t = \pi_{\theta_1}(\xi_t), \quad z_t \in \{0, 1, 2\} \label{eq:l2o_mode}\\
&p_t^{\text{ch}}, p_t^{\text{dis}} = \pi_{\theta_2}(\xi_t, z_t) \label{eq:l2o_power}\\
&p_t^{\text{net}} = d_t - s_t + p_t^{\text{ch}} - p_t^{\text{dis}} \label{eq:l2o_net}\\
&p_t^{\text{import}} = [p_t^{\text{net}}]^+, \; p_t^{\text{export}} = [-p_t^{\text{net}}]^+ \label{eq:l2o_balance}\\
&\xi := [s_{0:T}, d_{0:T}, \pi_{0:T}^{\text{buy}}, \pi_{0:T}^{\text{sell}}] \label{eq:l2o_instance}
\end{align}
\end{subequations}

where $\mathcal{D}$ is the distribution of problem instances (solar generation $s_t$, demand $d_t$, electricity prices $\pi_t$), $f$ represents battery dynamics~\eqref{eq:soc_dyn}, $C_{\text{cyc}}$ is the exact rainflow degradation cost, and $\pi_{\theta_1}, \pi_{\theta_2}$ are neural network policies parameterized by $\theta$ that predict operating modes and power levels respectively. The dynamics~\eqref{eq:l2o_dynamics} and power balance~\eqref{eq:l2o_net}--\eqref{eq:l2o_balance} are enforced architecturally through deterministic computation, forming a differentiable computational graph that requires no relaxation for these constraints.

This formulation addresses three fundamental challenges, (i) handling discrete mode decisions through differentiable relaxations while maintaining architectural constraint satisfaction, (ii) computing gradients through the nondifferentiable rainflow function $C_{\text{cyc}}$ via a differentiable layer, and (iii) ensuring feasibility under distribution shift through optional convex projection. The following subsections detail our solutions organized by computational flow.

\subsection{Forward Pass}

The forward pass transforms problem parameters $\xi$ into a complete solution $\mathbf{x} = \{z_t, p_t^{\text{ch}}, p_t^{\text{dis}}, \text{SOC}_t, p_t^{\text{import}}, p_t^{\text{export}}\}$ through a sequence of neural and deterministic computations.

\subsubsection{Input Normalization and Encoding}
Raw problem parameters are normalized to facilitate learning. The input vector comprises solar generation and demand (normalized by maximum observed values), buy prices (normalized by infinity norm), and price spread (buy minus sell prices, similarly normalized) to highlight arbitrage opportunities, yielding $\mathbf{x}_{\text{in}} \in \mathbb{R}^{4T}$.

\subsubsection{Neural Network Architecture}

The policy network employs a deep encoder with dual output heads for discrete mode selection and continuous power prediction. The encoder processes normalized inputs through three fully-connected layers with layer normalization (maintaining consistent activation statistics) and dropout (regularization against overfitting):
\begin{align}
\mathbf{z}_1 &= \text{Dropout}_{p=0.1}\left(\text{ReLU}\left(\text{LayerNorm}(W_1 \mathbf{x}_{\text{in}} + \mathbf{b}_1)\right)\right) \\
\mathbf{z}_2 &= \text{Dropout}_{p=0.1}\left(\text{ReLU}\left(\text{LayerNorm}(W_2 \mathbf{z}_1 + \mathbf{b}_2)\right)\right) \\
\mathbf{h} &= \text{Dropout}_{p=0.1}\left(\text{ReLU}\left(\text{LayerNorm}(W_3 \mathbf{z}_2 + \mathbf{b}_3)\right)\right)
\end{align}

\subsubsection{Mode Selection via Gumbel-Softmax}

The binary constraints and mode exclusivity~\eqref{eq:mode_ex} are handled by representing the mode as a single categorical variable $m_t \in \{0, 1, 2\}$ corresponding to charge, discharge, and idle respectively\cite{jang2016gumbel_softmax}. Following~\cite{boldocky2025dpc, tang2025l2o_minlp, jang2016gumbel_softmax}, we apply the Gumbel-Softmax relaxation to maintain differentiability while producing discrete outputs.

The mode selection head produces logits $\mathbf{h}_t \in \mathbb{R}^3$ from the encoder output. Gumbel noise is sampled as $g_{t,k} \sim \text{Gumbel}(0, 1)$ for $k \in \{0, 1, 2\}$, and the soft (differentiable) mode vector is computed as:
\begin{equation}
\mathbf{z}^{\text{soft}}_t = \frac{\exp\left((\mathbf{h}_t + \mathbf{g}_t)/\tau\right)}{\sum_{k=0}^{2} \exp\left((h_{t,k} + g_{t,k})/\tau\right)}
\label{eq:gumbel_soft}
\end{equation}
where $\tau$ is the temperature parameter. For the forward pass, discrete one-hot vectors are obtained using the Straight-Through Estimator (STE)\cite{bengio2013stochastic_neurons}:
\begin{equation}
\mathbf{z}^{\text{hard}}_t = \text{one\_hot}\left(\arg\max_k \mathbf{z}^{\text{soft}}_{t,k}\right)
\label{eq:ste}
\end{equation}
while the backward pass uses gradients of $\mathbf{z}^{\text{soft}}_t$, ensuring exact mode exclusivity in the forward pass while maintaining gradient flow.

\subsubsection{Constraint-Aware Power Output Parameterization}

Power bounds $p_t^{\text{ch}}, p_t^{\text{dis}} \in [0, P_{\text{max}}]$ and mode coupling constraints~\eqref{eq:ch_couple}-\eqref{eq:dis_couple} are enforced architecturally\cite{drgona2024dpc_guarantees, tang2025l2o_minlp}. Raw power predictions $\mathbf{p}^{\text{raw}}_t \in \mathbb{R}^2$ from the power head are transformed as:
\begin{align}
p_t^{\text{ch}} &= z^{\text{hard}}_{t,0} \cdot \sigma\left(p^{\text{raw}}_{t,0}\right) \cdot P_{\text{max}} \label{eq:power_ch_arch}\\
p_t^{\text{dis}} &= z^{\text{hard}}_{t,1} \cdot \sigma\left(p^{\text{raw}}_{t,1}\right) \cdot P_{\text{max}} \label{eq:power_dis_arch}
\end{align}
where $\sigma(\cdot)$ is the sigmoid function. Multiplication by the mode indicator automatically satisfies the constraints, when idle mode is selected ($z^{\text{hard}}_{t,2} = 1$), both power outputs are zero by construction.

\subsubsection{Deterministic SOC Unrolling}

Rather than treating SOC as a learnable output, we compute it deterministically from power decisions through forward integration of the dynamics~\eqref{eq:soc_dyn}. This unrolling ensures exact satisfaction of the dynamics constraint and enables end-to-end gradient backpropagation through the entire trajectory via automatic differentiation \cite{li2024pdhg_unrolled_l2o}.

\subsubsection{Constraint-Aware Grid Exchange Parameterization}

To ensure exact satisfaction of the power balance constraint~\eqref{eq:power_bal}, the network predicts only battery power $(p_t^{\text{ch}}, p_t^{\text{dis}})$. Net grid power is computed deterministically as:
\begin{equation}
p_t^{\text{net}} = d_t - s_t + p_t^{\text{ch}} - p_t^{\text{dis}}
\label{eq:net_power}
\end{equation}

Grid import and export are then derived via rectified linear operations:
\begin{equation}
p_t^{\text{import}} = \text{ReLU}(p_t^{\text{net}}), \quad p_t^{\text{export}} = \text{ReLU}(-p_t^{\text{net}})
\label{eq:power_rectification}
\end{equation}

This enforces power balance by construction, eliminating the need for penalty terms and preserving differentiability through ReLU operations\cite{hendriks2020linearly_constrained_nn}.

\subsubsection{Rainflow Algorithm}

The final forward pass component computes exact degradation cost $C_{\text{cyc}}(\text{SOC}_{0:T})$ using rainflow cycle counting on the unrolled SOC trajectory. The algorithm identifies all closed hysteresis loops and extracts cycles $\{(\delta_i, n_i)\}$, recording peak/valley time indices $\{\tau_p, \tau_v\}$ and cycle depths $\{\delta_i\}$ for each. Half-cycles ($n = 0.5$) are included only for discharge events ($\text{SOC}_{\tau_s} > \text{SOC}_{\tau_e}$). The degradation cost is:
\begin{equation}
C_{\text{cyc}} = \frac{R}{\eta_{\text{dis}}} \sum_{i=1}^{N_{\text{cyc}}} n_i \cdot \Phi(\delta_i)
\end{equation}

This structural information, peak/valley indices and cycle depths, is saved to the Differentiable Rainflow Layer context for hybrid gradient computation in the backward pass (Section~\ref{sec:backward}).

\subsection{Loss Function}
The complete training loss aggregates the original optimization objective with differentiable constraint penalties:
\begin{equation}
\begin{aligned}
\mathcal{L}_{\text{total}} = & \mathcal{L}_{\text{econ}} + w_{\text{deg}}(e) \cdot \mathcal{L}_{\text{cyc}} + \lambda_{\text{SOC}} \mathcal{L}_{\text{SOC}} + \lambda_{\text{term}} \mathcal{L}_{\text{term}}
\end{aligned}
\label{eq:total_loss}
\end{equation}
where:
\begin{align}
\mathcal{L}_{\text{econ}} &= \mathbb{E}_{\text{batch}}\left[\sum_{t=0}^{T-1} \left(p_t^{\text{import}} \pi_t^{\text{buy}} - p_t^{\text{export}} \pi_t^{\text{sell}}\right) \Delta t\right] \label{eq:loss_econ}\\
\mathcal{L}_{\text{cyc}} &= \mathbb{E}_{\text{batch}}\left[C_{\text{cyc}}(\text{SOC}_{0:T})\right] \label{eq:loss_cyc}\\
\mathcal{L}_{\text{SOC}} &= \frac{1}{T+1}\sum_{t=0}^{T} \Big[\text{ReLU}(\text{SOC}_{\min} - \text{SOC}_t) \notag\\
&\quad\quad + \text{ReLU}(\text{SOC}_t - \text{SOC}_{\max})\Big] \label{eq:loss_soc_bounds}\\
\mathcal{L}_{\text{term}} &= \text{SmoothL1}(\text{SOC}_T, \text{SOC}_0) \label{eq:loss_terminal}
\end{align}

Here $w_{\text{deg}}(e)$ is the epoch-dependent degradation weight from curriculum learning (Section~\ref{sec:training}), and $\lambda_{\text{SOC}} = 20$, $\lambda_{\text{term}} = 10$ are fixed penalty weights. The economic cost $\mathcal{L}_{\text{econ}}$ minimizes grid energy costs, degradation cost $\mathcal{L}_{\text{cyc}}$ uses exact rainflow~\eqref{eq:degradation_cost} with the differentiable rainflow layer (Section~\ref{sec:backward}), SOC bound penalty $\mathcal{L}_{\text{SOC}}$ uses ReLU to penalize only violations, and terminal constraint $\mathcal{L}_{\text{term}}$ uses SmoothL1 loss\cite{girshick2015fast_rcnn} for stable gradient flow. This loss formulation enables end-to-end training since all components are differentiable.

\subsection{Backward Pass}
\label{sec:backward}

The backward pass computes gradients $\partial \mathcal{L} / \partial \theta$ through the entire computational graph. While most components use standard automatic differentiation, two aspects require specialized handling whiche are categorical mode variables and nondifferentiable rainflow counting. 

\subsubsection{Gradient Flow Through Categorical Variables}

Discrete mode selection $\mathbf{z}^{\text{hard}}_t$ is nondifferentiable, but gradients flow through the Gumbel-Softmax relaxation $\mathbf{z}^{\text{soft}}_t$ via the Straight-Through Estimator. During the backward pass:
\begin{equation}
\frac{\partial \mathcal{L}}{\partial \mathbf{h}_t} = \frac{\partial \mathcal{L}}{\partial \mathbf{z}^{\text{hard}}_t} \cdot \frac{\partial \mathbf{z}^{\text{soft}}_t}{\partial \mathbf{h}_t}
\end{equation}
where the first term treats discrete output as if it were continuous (straight-through), and the second term provides smooth gradients through the softmax operation. Temperature annealing progressively sharpens the distribution during training (Section~\ref{sec:training})\cite{tang2025l2o_minlp}.

\subsubsection{Differentiable Rainflow Layer via Hybrid Gradients}

The central technical contribution is a differentiable rainflow layer enabling gradient-based optimization despite the combinatorial, nondifferentiable nature of rainflow cycle counting. The key challenge is that while rainflow cost $C_{\text{cyc}}$ can be computed exactly in the forward pass, standard automatic differentiation fails because cycle identification relies on discrete stack operations that have no gradient.

We resolve this by computing a hybrid gradient $\partial \mathcal{L} / \partial \text{SOC}_t$ that combines two complementary components through a single mixing weight $\lambda \in [0,1]$:
\begin{equation}
\frac{\partial \mathcal{L}}{\partial \text{SOC}_t} = \lambda \, \nabla_{\text{exact}}(\text{SOC}_t) + (1-\lambda)\, \nabla_{\text{proxy}}(\text{SOC}_t)
\label{eq:hybrid_gradient}
\end{equation}
where $\lambda$ trades physical accuracy at cycle extrema against dense gradient coverage, and $\lambda = 0.5$ is adopted as a robust default (Section~\ref{sec:lambda_sensitivity}).

Exact Rainflow Gradient (Sparse).
Although the cycle identification process is nondifferentiable, once cycles are identified, the degradation cost is a differentiable function of the SOC values at cycle extrema. For each identified cycle $k$ with peak at timestep $\tau_p^k$ and valley at $\tau_v^k$, the cycle depth is $\delta_k = |\text{SOC}_{\tau_p^k} - \text{SOC}_{\tau_v^k}|/E_{\text{cap}}$. Applying the chain rule:
\begin{equation}
\nabla_{\text{exact}}(\text{SOC}_t) = \frac{\partial C_{\text{cyc}}}{\partial \text{SOC}_t} = \sum_{k: \{\tau_p^k, \tau_v^k\} \ni t} \frac{\partial}{\partial \text{SOC}_t}\left[\frac{R}{\eta_{\text{dis}}} \cdot a \cdot (\delta_k)^b\right]
\label{eq:exact_gradient_sum}
\end{equation}

For each cycle extremum, the gradient is:
\begin{equation}
\nabla_{\text{exact}}(\text{SOC}_t) = \begin{cases}
+\frac{R a b}{\eta_{\text{dis}} E_{\text{cap}}} (\delta_k)^{b-1} & \text{if } t = \tau_p^k \text{ (peak)} \\[4pt]
-\frac{R a b}{\eta_{\text{dis}} E_{\text{cap}}} (\delta_k)^{b-1} & \text{if } t = \tau_v^k \text{ (valley)} \\[4pt]
0 & \text{otherwise}
\end{cases}
\label{eq:exact_gradient}
\end{equation}

This sparse gradient is the exact mathematical derivative of the rainflow cost with respect to SOC at cycle extrema. It provides correct physics-based gradient direction where it matters most, at the points that define cycle depths. The exact gradient component is theoretically grounded in nonsmooth analysis. Under a no-ties condition ($\text{SOC}_t \neq \text{SOC}_{t+1}$ for all $t$), the rainflow cycle pairing is locally combinatorially rigid. For any trajectory perturbation smaller than half the minimum consecutive SOC difference, the set of detected cycles, their peak/valley indices, and their closure ordering remain unchanged. Within this neighborhood, $C_{\text{cyc}}$ reduces to a finite sum of smooth compositions of absolute differences, making it classically differentiable with gradient given exactly by~\eqref{eq:exact_gradient}. Since $C_{\text{cyc}}$ is locally Lipschitz (the number of cycles is bounded by $T/2$ and each cycle depth is Lipschitz in the trajectory), its Clarke generalized gradient is well-defined everywhere, and $\nabla_{\text{exact}}$ is a member of this subdifferential. The hybrid estimator~\eqref{eq:hybrid_gradient} therefore constitutes a bounded-variance Clarke subgradient oracle, ensuring that standard convergence guarantees for stochastic gradient descent on nonsmooth weakly convex objectives apply to the training procedure~\cite{davis2019stochastic}.

However, intermediate timesteps receive zero gradient, leading to potential gradient starvation during training. To address this limitation, we introduce a complementary dense component.

Smooth Proxy Gradient (Dense).
We define a differentiable proxy objective that approximates rainflow behavior:
\begin{equation}
C_{\text{proxy}} = \frac{R a}{\eta_{\text{dis}} E_{\text{cap}}^b} \sum_{t=0}^{T-1} \left(|\text{SOC}_{t+1} - \text{SOC}_t| + \epsilon\right)^b
\label{eq:proxy_cost}
\end{equation}
where $\epsilon = 10^{-6}$ ensures numerical stability near zero. This proxy penalizes the total magnitude of SOC variations with the same nonlinear power $b = 2.03$ as the rainflow stress function, providing a smooth surrogate that encourages shallow, infrequent cycling. The gradient via automatic differentiation is:
\begin{equation}
\begin{aligned}
\nabla_{\text{proxy}}(\text{SOC}_t)
&= \frac{Rab}{\eta_{\text{dis}} E_{\text{cap}}^b}
\Big[
|\Delta_{t-1}|^{b-1}\operatorname{sign}(\Delta_{t-1}) \\
&\quad -
|\Delta_t|^{b-1}\operatorname{sign}(\Delta_t)
\Big]
\end{aligned}
\label{eq:proxy_gradient}
\end{equation}
where $\Delta_t = \text{SOC}_{t+1} - \text{SOC}_t$. Unlike the exact gradient, this provides dense coverage with nonzero gradients at every timestep, enabling stable backpropagation through time.

The hybrid gradient~\eqref{eq:hybrid_gradient} combines exact physics at cycle extrema with a continuous optimization landscape everywhere, ensuring that (i) training optimizes the true rainflow degradation cost at critical points, (ii) gradient starvation is prevented through dense proxy coverage, and (iii) stable end-to-end training via backpropagation through time is maintained with physical fidelity. This combination reflects a bias--variance trade-off between the two components. The exact rainflow gradient is unbiased but sparse and high-variance, since it is supported only at the state-of-charge extrema where cycles close, whereas the proxy gradient is smooth and dense but biased. Mixing the two through $\lambda$ therefore trades the bias of the proxy against the variance of the exact gradient, yielding an informative descent direction everywhere while remaining faithful to the true degradation cost at the extrema. Figure~\ref{fig:hybrid_gradient} illustrates this decomposition. The exact gradient~(b) provides sparse but physics-accurate signals at cycle extrema, the proxy gradient~(c) offers dense coverage across all timesteps, the hybrid~(d) combines both components, and (e) shows that the hybrid scheme successfully minimizes the exact rainflow cost. The effect of the mixing weight $\lambda$ is examined in the sensitivity study of Section~\ref{sec:lambda_sensitivity}.

\begin{figure}[t]
\centering
\includegraphics[width=\columnwidth]{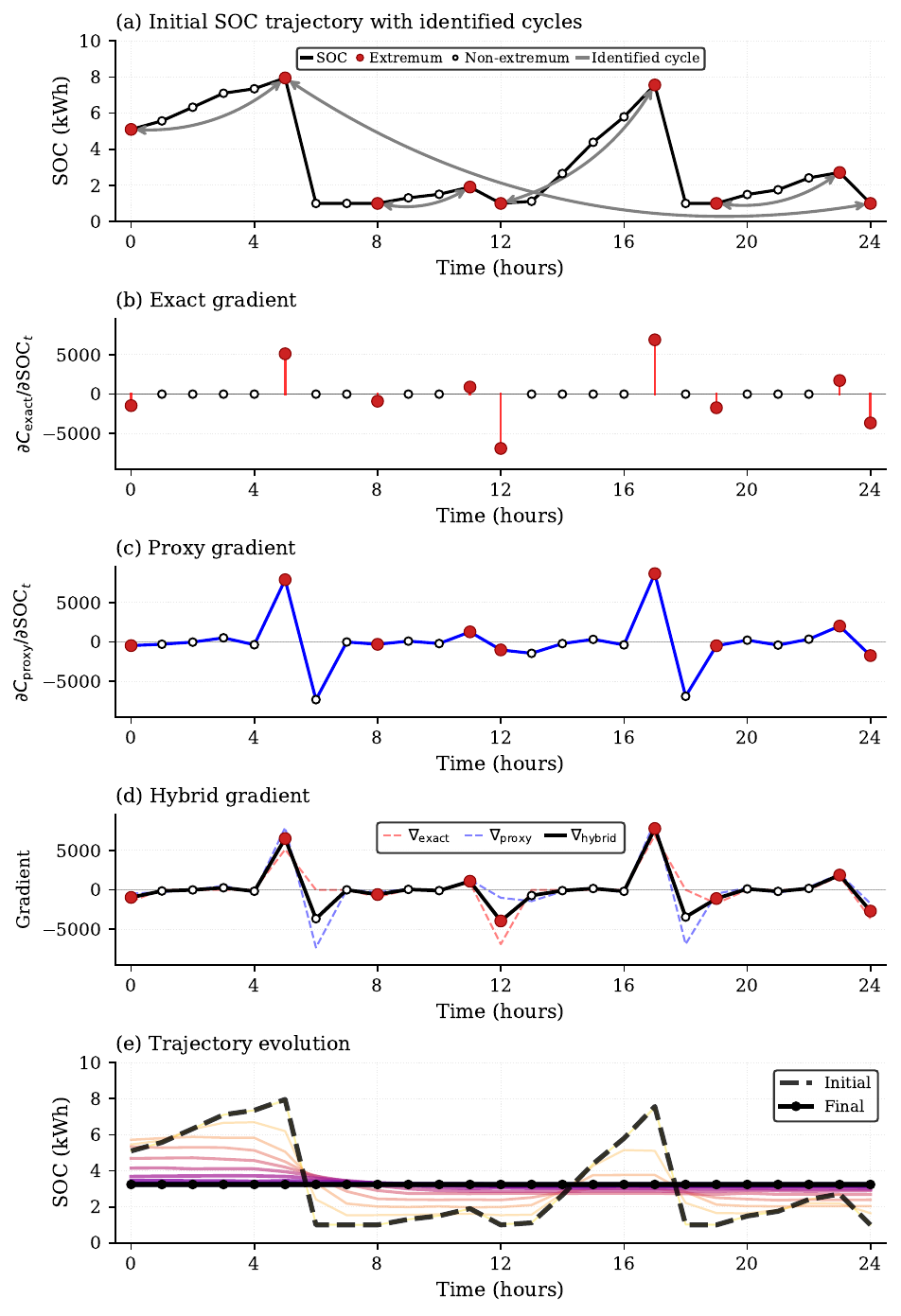}
\caption{Hybrid gradient approach for differentiable rainflow cycle counting.}
\label{fig:hybrid_gradient}
\end{figure}

\subsection{Training Strategies}
\label{sec:training}
We employ curriculum learning to balance competing objectives: the degradation cost weight decreases from $w_{\text{init}} = 2.0$ to $w_{\text{final}} = 1.0$ via square-root schedule over training epochs, teaching conservative cycling initially before allowing economic arbitrage exploitation. Simultaneously, Gumbel-Softmax temperature anneals linearly from $\tau = 5.0$ (uniform exploration) to $\tau = 0.1$ (deterministic selection)\cite{tang2025l2o_minlp}. To prevent early convergence to idle-only policies, the mode selection head bias is initialized as $[0.2, 0.2, -0.4]$ (charge, discharge, idle), favoring active modes\cite{bengio2021ml_combinatorial_optimization}. For robustness, inputs are perturbed with 5\% Gaussian noise during training to prevent overfitting and encourage generalization beyond specific profiles\cite{bertsimas2022online_mio}. We use OneCycleLR learning rate schedule\cite{smith2017superconvergence} ramping from $10^{-4}$ to $10^{-3}$ over the first 10\% of epochs, then cosine annealing to $10^{-5}$, combined with AdamW optimizer ($\lambda_{\text{wd}} = 10^{-4}$) and gradient clipping ($\|\nabla\|_2 \leq 1.0$) for stability.

\subsection{Safety Projection Filter}
\label{sec:feasibility_projection}

While the architectural constraints guarantee exact satisfaction of power balance~\eqref{eq:power_bal}, SOC dynamics~\eqref{eq:soc_dyn}, and mode-power coupling~\eqref{eq:ch_couple}--\eqref{eq:dis_couple}, the SOC bound constraints~\eqref{eq:soc_lim} are enforced only through soft penalties during training to preserve gradient flow. To ensure 100\% constraint satisfaction at inference, we introduce an optional convex projection filter applied only during deployment\cite{donti2021robust, wabersich2021predictive, nghiem2023physics, tang2025l2o_minlp}. Given neural network power predictions $(p_t^{\text{ch,NN}}, p_t^{\text{dis,NN}})$, the projection solves a quadratic program minimizing squared Euclidean distance to the network output subject to all hard constraints: 
\begin{equation}
\begin{aligned}
\min_{p^{\text{ch}}, p^{\text{dis}}, \text{SOC}} \quad & \sum_{t=0}^{T-1} \left[(p_t^{\text{ch}} - p_t^{\text{ch,NN}})^2 + (p_t^{\text{dis}} - p_t^{\text{dis,NN}})^2\right]
\end{aligned}
\label{eq:projection_objective}
\end{equation}
\begin{subequations}
\label{eq:projection_constraints}
\begin{align}
\text{s.t.}\quad
&\text{SOC}_{t+1} = \text{SOC}_t + \left(\eta_{\text{ch}} p_t^{\text{ch}} - \frac{p_t^{\text{dis}}}{\eta_{\text{dis}}}\right) \Delta t, && \forall t \label{eq:proj_dyn}\\
&\text{SOC}_{\min} \leq \text{SOC}_t \leq \text{SOC}_{\max}, && \forall t \label{eq:proj_soc}\\
&0 \leq p_t^{\text{ch}} \leq P_{\text{max}}, \; 0 \leq p_t^{\text{dis}} \leq P_{\text{max}}, && \forall t \label{eq:proj_power}\\
&\text{SOC}_0 = \text{SOC}_{\text{init}} \label{eq:proj_init}\\
& p_t^{\text{ch}} = 0 \quad \text{if } z_t^{\text{NN}} \neq \text{charge}, && \forall t \label{eq:proj_mode_ch}\\
& p_t^{\text{dis}} = 0 \quad \text{if } z_t^{\text{NN}} \neq \text{discharge}, && \forall t \label{eq:proj_mode_dis}
\end{align}
\end{subequations}
To maintain real-time performance, we employ selective projection. For each network output, we simulate the SOC trajectory and check for violations with tolerance $\epsilon = 10^{-4}$ kWh. Only infeasible samples are projected via~\eqref{eq:projection_objective}--\eqref{eq:projection_constraints}, minimizing deviation from the learned policy with minimal impact on solution quality. This approach combines soft constraint training for effective learning with hard constraint enforcement at deployment for guaranteed feasibility~\cite{nghiem2023physics}.

The projection~\eqref{eq:projection_objective}--\eqref{eq:projection_constraints} guarantees constraint satisfaction independent of network output quality. 

\begin{assumption}[Feasibility of Zero Action]
\label{ass:feasibility}
For any $\text{SOC}_0 \in [\text{SOC}_{\min}, \text{SOC}_{\max}]$, the zero-power action $p_t^{\text{ch}} = p_t^{\text{dis}} = 0$ is feasible for~\eqref{eq:projection_constraints}.
\end{assumption}
\begin{remark}
This holds by construction: zero power gives $\text{SOC}_{t+1}=\text{SOC}_t=\text{SOC}_0$ via~\eqref{eq:proj_dyn}, satisfying~\eqref{eq:proj_soc}--\eqref{eq:proj_init} and the mode-locking constraints~\eqref{eq:proj_mode_ch}--\eqref{eq:proj_mode_dis} for any $z_t^{\text{NN}} \in \Sigma$.
\end{remark}
\begin{theorem}[Well-Posedness and Feasibility]
\label{thm:wellposed}
Under Assumption~\ref{ass:feasibility}, the QP~\eqref{eq:projection_objective}--\eqref{eq:projection_constraints} admits a unique solution for any network output, and that solution satisfies $\text{SOC}_t \in [\text{SOC}_{\min}, \text{SOC}_{\max}]$ for all $t$.
\end{theorem}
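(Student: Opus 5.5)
The plan is to treat the QP~\eqref{eq:projection_objective}--\eqref{eq:projection_constraints} as a problem in the power variables alone. The dynamics~\eqref{eq:proj_dyn} together with~\eqref{eq:proj_init} define each $\text{SOC}_t$ as an affine function of $(p^{\text{ch}}, p^{\text{dis}})$. Concretely,
\[
\text{SOC}_t = \text{SOC}_{\text{init}} + \sum_{s<t}\Big(\eta_{\text{ch}} p_s^{\text{ch}} - \tfrac{p_s^{\text{dis}}}{\eta_{\text{dis}}}\Big)\Delta t .
\]
After eliminating SOC, the feasible set $\mathcal{F}\subset\mathbb{R}^{2T}$ of power vectors is the intersection of three kinds of sets: the box~\eqref{eq:proj_power}, the linear equalities~\eqref{eq:proj_mode_ch}--\eqref{eq:proj_mode_dis}, and the finitely many affine inequalities obtained by substituting the SOC expressions into~\eqref{eq:proj_soc}. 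Hence $\mathcal{F}$ is a polyhedron, so it is closed and convex. Since $z^{\text{NN}}$ is fixed input data rather than a decision variable, the mode constraints are simple equalities and introduce no integrality.

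For existence, I would invoke Assumption~\ref{ass:feasibility}: the zero action lies in $\mathcal{F}$, so $\mathcal{F}\neq\emptyset$. The set $\mathcal{F}$ is also bounded, since it is contained in the box $[0,P_{\max}]^{2T}$, and is therefore compact. The objective~\eqref{eq:projection_objective} equals $\|p-p^{\text{NN}}\|_2^2$ and is continuous, so Weierstrass gives a minimizer. An alternative is to read the problem as a Euclidean projection of $p^{\text{NN}}$ onto a nonempty closed convex set, which yields existence without appealing to compactness.

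For uniqueness, the objective is strictly convex in $p=(p^{\text{ch}},p^{\text{dis}})$ because its Hessian is $2I$. Suppose there were two distinct minimizers; their midpoint lies in $\mathcal{F}$ by convexity and would have strictly smaller value, which is a contradiction. Uniqueness in the full variable set $(p,\text{SOC})$ then follows because SOC is uniquely determined by $p$ through~\eqref{eq:proj_dyn} and~\eqref{eq:proj_init}.

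Feasibility is immediate: the minimizer lies in $\mathcal{F}$, so it satisfies~\eqref{eq:proj_soc} for all $t$, independent of the value of $p^{\text{NN}}$. The only point needing care is that strict convexity holds only in the power variables and not jointly in $(p,\text{SOC})$, which is why the elimination step comes first. I do not expect any genuine obstacle beyond this bookkeeping, together with noting that the argument holds for arbitrary, even infeasible, network outputs $p^{\text{NN}}$ and $z^{\text{NN}}$.
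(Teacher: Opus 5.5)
Your proposal is correct and follows the same route as the paper: the feasible set is nonempty by Assumption~\ref{ass:feasibility} and closed and convex because every constraint is linear, the strictly convex objective gives a unique minimizer, and the SOC bounds hold simply because that minimizer is feasible. Your version is tighter than the paper's one-line proof in two places: the paper asserts that a strictly convex function on a nonempty closed convex set has a unique minimizer, but existence actually needs coercivity or compactness, which you supply through the box bound or the projection theorem; and the paper calls the objective strictly convex even though it is not strictly convex jointly in $(p,\text{SOC})$, which your elimination of SOC handles explicitly.
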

\begin{proof}
The feasible set is nonempty (Assumption~\ref{ass:feasibility}), closed, and convex (linear equality and inequality constraints), and the objective~\eqref{eq:projection_objective} is strictly convex; a strictly convex function over such a set has a unique minimizer~\cite{bauschke2017convex}. Feasibility of that minimizer under~\eqref{eq:proj_soc} gives $\text{SOC}_t \in [\text{SOC}_{\min}, \text{SOC}_{\max}]$.
\end{proof}
\begin{remark}
Feasibility holds regardless of the quality of the network predictions or the optimality of the selected mode: the projection preserves the network's mode decisions and adjusts only the continuous power magnitudes, distributing the minimal $L_2$ correction across timesteps.
\end{remark}

\subsection{Extension to PWL Approximation}
\label{sec:l2o_pwl}
For ablation, we train an identical MI-DPC architecture on the PWL degradation formulation~(Section~\ref{sec:pwl_method}), where the linear surrogate is already differentiable and standard autograd replaces the hybrid rainflow layer. All other training strategies remain identical.

\subsection{Reinforcement Learning Baseline}
\label{sec:rl_baseline}
To contrast the proposed model-based approach with a model-free alternative, we include a reinforcement learning baseline that optimizes the same true objective (i.e., economic cost plus exact rainflow degradation) on the same problem, using Proximal Policy Optimization (PPO)~\cite{schulman2017ppo}. The agent observes the identical forecast seen by MI-DPC together with the realized SOC and time index, and its per-step reward is the economic cost, augmented at the terminal step by the exact rainflow degradation cost computed with the same cycle-inclusion logic used throughout; the return therefore equals the true cost.

Formally, the problem is cast as a finite-horizon Markov decision process $(\mathcal{S}, \mathcal{A}, P, r, T)$ over the $T$ dispatch intervals. The state $s_t = (\mathbf{f}, \text{SOC}_t, t) \in \mathcal{S}$ comprises the forecast $\mathbf{f}$ (solar, demand, buy price, and price spread), the realized state of charge $\text{SOC}_t$, and the time index. The action $a_t \in \mathcal{A}$ selects a mode $m_t \in \{\text{charge}, \text{discharge}, \text{idle}\}$ and a power level, from which the charge/discharge powers $(p_t^{\text{ch}}, p_t^{\text{dis}})$ follow. The transition $P$ is the deterministic SOC dynamics $\text{SOC}_{t+1} = \text{SOC}_t + (p_t^{\text{ch}}\eta_{\text{ch}} - p_t^{\text{dis}}/\eta_{\text{dis}})\,\Delta t$, and the reward is the negative per-step economic cost, with the exact rainflow degradation cost added at the terminal step so that the undiscounted return equals the true objective $-(C_{\text{energy}} + C_{\text{cyc}})$.

A key design choice is that the action space is mixed-integer rather than continuous. At each step the agent selects a discrete mode (charge, discharge, or idle) together with a power level, mirroring the binary mode structure that MI-DPC handles via the Gumbel--Softmax relaxation. A purely continuous-power baseline would solve an easier problem than MI-DPC and would not exercise the discrete mode-selection that is central to the dispatch task, making the comparison unfair. In contrast, the mixed-integer action space ensures that RL and MI-DPC face the same combinatorial decision, with the sole difference being model-free exploration of the modes versus the differentiable relaxation used by MI-DPC. SOC bounds are imposed as a soft penalty during training, and feasibility is reported as an outcome; the same QP safety filter used by MI-DPC can optionally be applied at inference for a like-for-like feasibility comparison.


\section{Experimental Results}
\label{sec:results}
We evaluate the MI-DPC framework through progressive generalization tests which are synthetic test scenarios, single-battery training generalization across the fleet, and fleet-wide training.

\subsection{Sensitivity to the Mixing Weight $\lambda$}
\label{sec:lambda_sensitivity}
As introduced in Section~III-E, the hybrid gradient reflects a bias--variance tradeoff whose optimum is interior to $(0,1)$ and trajectory-dependent. Figure~3 examines this over 50 trajectories. Fig. 3a shows that both endpoints converge slowest: for the pure proxy ($\lambda=0$) this is because its bias stalls descent near the optimum, while for the pure exact ($\lambda=1$) because its sparse gradient is uninformative between extrema. Meanwhile, interior values converge fastest. Fig. 3b sweeps $\lambda$ densely and reveals a broad flat basin over the interior in which the final cost is nearly constant, rising sharply only near either endpoint. Performance is therefore insensitive to the exact value of $\lambda$ within this basin so we adopt $\lambda=0.5$ as a robust default requiring no tuning, rather than as a claimed optimum.

\begin{figure}[t]
\centering
\includegraphics[width=0.9\columnwidth]{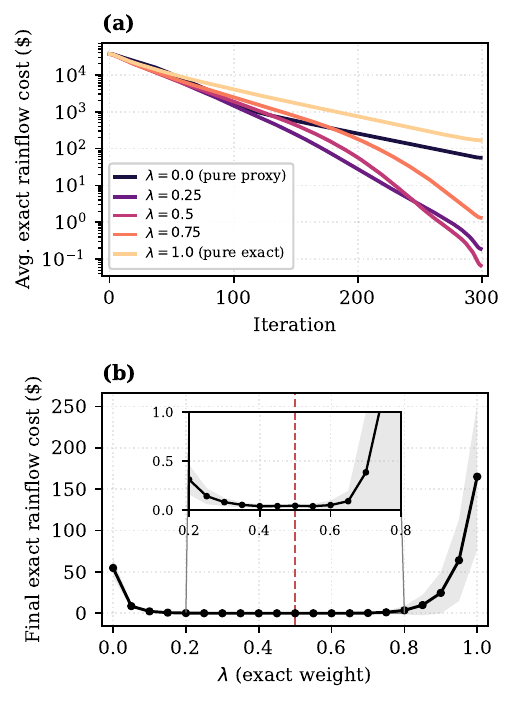}
\caption{Sensitivity of daily rainflow degradation cost to the mixing weight $\lambda$ over 50 trajectories. (a) Convergence of the exact rainflow cost. (b) Dense sweep showing dependence of cost on $\lambda$.}
\label{fig:lambda}
\end{figure}

\subsection{Experimental Setup}
\label{sec:experimental_setup}
We evaluate our approach on 10 residential batteries across three utility regions: 6 under SDG\&E (California)~\cite{sdge_nbt_tariff}, 1 under Xcel Energy (Colorado)~\cite{xcel_nem}, and 3 under APS (Arizona)~\cite{aps_rates}, each over 365 days, yielding 3{,}650 independent battery-day dispatch instances. These days span the full seasonal range across three climatically distinct regions, so that each instance presents a different combination of load, solar generation, and price profile under varying weather conditions. These territories span a range of time-of-use rate designs with markedly different degrees of price variation, which directly determines the arbitrage incentive available to storage. SDG\&E (TOU-DR1~\cite{sdge_nbt_tariff}) has the steepest structure, with a summer peak-to-off-peak differential of \$0.37/kWh; APS (R-TOU-E~\cite{aps_rates}) is intermediate at roughly \$0.22/kWh; and Xcel (RE-TOU~\cite{xcel_nem}) is the flattest of the three at \$0.21/kWh summer, narrowing to \$0.09/kWh in winter. Export compensation also differs. Under SDG\&E NEM~3.0, exports are credited at hourly avoided-cost rates well below retail, which sharpens the incentive for self-consumption through storage, whereas APS and Xcel likewise apply below-retail export compensation. The policy is thus exposed to tariff regimes ranging from strongly peaked to relatively flat, rather than a single rate design.

All batteries share identical specifications: 10 kWh capacity, 4.8 kW power limits, 92\% efficiency, SOC bounds [10\%, 90\%], and Li-NMC degradation parameters~\cite{xu2018_cycle_aging_market}. The degradation model is chemistry specific only through the two constants of the stress law $\Phi(\delta)=a\,\delta^{b}$. The differentiable rainflow layer, the architecture, and the training procedure are unchanged for any monotonic stress function, and adapting the framework to another chemistry amounts to recalibrating coefficients rather than modifying the method. We therefore evaluate the procedure on Li-NMC throughout and do not claim that the reported numerical trade-offs generalized unchanged to other chemistries, since $b$ enters as an exponent on cycle depth and reshapes the depth-cost relationship. Each battery has one year of real metered customer data (rooftop solar and household demand at 15-minute resolution); these are true residential traces and are therefore irregular and non-stationary by construction, not smoothed or synthetic signals. Evaluated over 365 days per battery, this yields $3{,}650$ distinct daily instances, each with its own load, solar generation, and price profile, so the results reflect a broad and heterogeneous set of operating conditions rather than a small sample of scenarios. Our training procedure combines real profiles with synthetic augmentation; the single-battery training uses one SDG\&E customer, while the fleet-wide training samples from all 10 batteries.

We compare against five methods: PWL (Section~\ref{sec:pwl_method}), the piecewise-linear degradation MILP solved via Gurobi and post-evaluated under exact rainflow; DE, differential evolution optimizing the exact rainflow objective directly (Section~\ref{sec:comparison_methods}); a linear throughput baseline (Section~\ref{sec:comparison_methods}) that penalizes discharged energy without cycle-depth structure; MI-DPC-PWL, an ablation with the proposed architecture trained on PWL gradients rather than exact rainflow (Section~\ref{sec:l2o_pwl}); and a reinforcement learning baseline (Section~\ref{sec:rl_baseline}), a PPO agent with a mixed-integer action space trained on the same objective. Metrics include the "performance gap vs. PWL (\%)", computational time and speedup (wall-clock vs. PWL), and feasibility. The "performance gap vs.\ PWL" is defined as the relative difference in total cost between the two schedules. We avoid the term optimality gap since the exact problem is a mixed-integer nonlinear program whose global optimum is not computable at this scale, so the PWL solution serves as a benchmark rather than ground truth.

All models were trained on a single NVIDIA Tesla T4 GPU (16~GB) hosted on a compute node with an Intel Xeon Gold 6248R CPU (3.00~GHz) and 1.5~TB of memory; training the MI-DPC policy required approximately 8.5 hours. This is a one-off offline cost, incurred once per fleet configuration and amortized over all subsequent instances the trained policy solves. The GPU is used only for this offline training; all reported solve times are per-instance inference, i.e. the time to produce one schedule for a single battery-day, with MI-DPC evaluated as a single batch-size-1 forward pass including the QP projection. These timings were measured on a separate machine with an Intel Core Ultra 7 155H CPU (16 cores) and 16~GB of memory, running both the Gurobi PWL benchmark (Gurobi 13.0) and MI-DPC inference on the same CPU, so the comparison is per-instance and hardware-matched.

\subsection{In-Distribution Performance with Synthetic Data}
\label{sec:in_dist}

We first evaluate on 10 test scenarios drawn from the training distribution but not seen during training. Table~\ref{tab:in_dist_performance} reports the full set of metrics. The proposed MI-DPC policy trained with exact rainflow achieves the best performance among all learning-based methods, with a mean performance gap of $+1.74\%$ relative to PWL while delivering a $229\times$ speedup ($41$~ms vs.\ $9.42$~s) with the safety projection enabled. Without projection, inference drops to $8$~ms for a $1{,}240\times$ speedup at identical solution quality, indicating that the learned policy already produces feasible schedules and that the projection serves as a guarantee rather than a corrective. MI-DPC attains $10/10$ feasibility in both cases, confirming that the soft-penalty training strategy teaches the network to respect SOC bounds implicitly.

The cost decomposition shows why the training objective matters. MI-DPC closely matches the cost structure of PWL in both its economic (\$29.02 vs.\ \$28.55) and degradation (\$0.62 vs.\ \$0.58) components, indicating that training on the true rainflow objective internalizes the correct economic and degradation trade-off. The two baselines that misprice degradation fail at opposite extremes. The throughput baseline, penalizing discharged energy with no cycle-depth structure, incurs the highest degradation cost (\$1.29) and lands at $+2.34\%$, while MI-DPC-PWL, an identical architecture trained on the linearized surrogate, over-suppresses cycling to reach the lowest degradation cost (\$0.15) but sacrifices arbitrage revenue and rises to $+8.14\%$. Holding the architecture fixed and changing only the training objective from the PWL surrogate to exact rainflow thus reduces the gap from $+8.14\%$ to $+1.74\%$, isolating the value of the exact rainflow layer.

The reinforcement learning baseline, which optimizes the same objective with the same mixed-integer decision structure but without a differentiable model, is not competitive, reaching $+17.90\%$ with projection and $+22.98\%$ without, and is feasible on only $6/10$ days before projection. This is consistent with the discrete mode-selection bottleneck, as the agent lacks a differentiable path from the terminal degradation cost to the mode decisions and cannot sustain the multi-interval charge and discharge patterns that arbitrage requires. The projection nonetheless restores RL feasibility to $10/10$, showing that the QP filter guarantees constraint satisfaction even for a policy that violates SOC bounds on its own. DE attains a $+3.43\%$ gap but requires roughly 59 minutes per day, confirming that derivative-free search, while a useful quality reference, is impractical for deployment. As none of these baselines is competitive across solution quality, feasibility, and computational cost together, the remainder of the analysis focuses on MI-DPC trained with exact rainflow.

\begin{table*}[t]
\centering
\caption{In-Distribution Performance: Comparison Across Methods (10 Test Days)}
\label{tab:in_dist_performance}
\footnotesize
\begin{tabular}{@{}lccccccc@{}}
\toprule
\textbf{Method} & \textbf{Economic} & \textbf{Degr.} & \textbf{Total} & \textbf{Gap} & \textbf{Feasible} & \textbf{Time} & \textbf{Speedup} \\
 & \textbf{Cost (\$)} & \textbf{Cost (\$)} & \textbf{Cost (\$)} & \textbf{(\%)} & \textbf{Days} & \textbf{(s)} & \textbf{vs. PWL} \\
\midrule
\multicolumn{7}{l}{\textit{Baseline Methods}} \\
PWL & $28.55$ & $0.58$ & $29.13$ & $0.00$\% & 10/10 & 9.42 & 1.0× \\
Throughput & $28.52$ & $1.29$ & $29.81$ & $+2.34$\% & 10/10 & 1.10 & 8.6× \\
DE ($n=3$) & $29.63$ & $0.50$ & $30.13$ & $+3.43$\% & 3/3 & 3559.1 & 0.003× \\
\midrule
\multicolumn{7}{l}{\textit{Reinforcement Learning (PPO)}} \\
RL (no proj.) & $35.62$ & $0.20$ & $35.82$ & $+22.98$\% & 6/10 & 0.071 & 132.5× \\
RL + Proj. & $34.18$ & $0.16$ & $34.34$ & $+17.90$\% & 10/10 & 0.132 & 71.4× \\
\midrule
\multicolumn{7}{l}{\textit{MI-DPC: Trained with PWL Approximation}} \\
MI-DPC-PWL (no proj.) & $31.35$ & $0.15$ & $31.50$ & $+8.14$\% & 10/10 & 0.0004 & 24{,}154× \\
MI-DPC-PWL + Proj. & $31.35$ & $0.15$ & $31.50$ & $+8.14$\% & 10/10 & 0.019 & 496× \\
\midrule
\multicolumn{7}{l}{\textit{MI-DPC: Trained with Exact Rainflow}} \\
$\mathbf{MI\text{-}DPC}$ \textbf{(no proj.)} & $\mathbf{29.02}$ & $\mathbf{0.62}$ & $\mathbf{29.64}$ & $\mathbf{+1.74}$\% & \textbf{10/10} & $\mathbf{0.008}$ & $\mathbf{1239.9}$× \\
$\mathbf{MI\text{-}DPC}$ \textbf{+ Proj.} & $\mathbf{29.01}$ & $\mathbf{0.62}$ & $\mathbf{29.63}$ & $\mathbf{+1.74}$\% & \textbf{10/10} & $\mathbf{0.041}$ & $\mathbf{228.7}$× \\
\bottomrule
\end{tabular}
\end{table*}

\subsection{Single-Battery Training Generalization}
\label{sec:single_battery_training}

To assess cross-customer and cross-utility generalization, we train the MI-DPC policy exclusively on one SDG\&E customer with synthetic augmentation, then evaluate on all 10 batteries in the fleet. This experiment reveals whether a policy learned from a single customer's data can transfer to different households with distinct solar and demand patterns and entirely different utility pricing structures.

\subsubsection{Performance on Training Battery}

Table~\ref{tab:annual_performance} shows annual performance on the training battery over 365 days. The MI-DPC policy achieves excellent results with +0.36\% performance gap (\$7691.03 vs. \$7663.38 baseline) and 208× speedup (21.06s vs. 4370.24s annually) when projection is enabled. Without projection, the gap remains nearly identical (+0.35\%), but 8 summer days (2.2\%) violate feasibility constraints, all resolved by projection with negligible cost impact (\$0.60 difference).

\begin{table*}[t]
\centering
\caption{Single-Battery Training Performance (365 Battery-Days)}
\label{tab:annual_performance}
\small
\begin{tabular}{lccccccc}
\toprule
\textbf{Method} & \textbf{Economic} & \textbf{Degr.} & \textbf{Total} & \textbf{Gap} & \textbf{Feasible} & \textbf{Time} & \textbf{Speedup} \\
 & \textbf{Cost (\$)} & \textbf{Cost (\$)} & \textbf{Cost (\$)} & \textbf{(\%)} & \textbf{Days} & \textbf{(s)} & \textbf{vs. PWL} \\
\midrule
\multicolumn{8}{l}{\textit{Baseline}} \\
No Battery & 8355.98 & 0.00 & 8355.98 & +9.04\% & 365/365 & --- & --- \\
\midrule
\multicolumn{8}{l}{\textit{Classical Optimization}} \\
PWL & 7502.97 & 160.41 & 7663.38 & 0.00\% & 365/365 & 4370.24 & 1.0× \\
\midrule
\multicolumn{8}{l}{\textit{MI-DPC: Trained on SDG\&E Only}} \\
MI-DPC (no proj.) & 7529.87 & 160.56 & 7690.43 & +0.35\% & 357/365 & 2.35 & 1859× \\
MI-DPC + Proj. & 7530.71 & 160.32 & 7691.03 & +0.36\% & 365/365 & 21.06 & 208× \\
\bottomrule
\end{tabular}
\end{table*}

\subsubsection{Generalization to Other Batteries}

We now evaluate the same single-battery-trained policy across all 10 batteries in the fleet, including the training battery. This tests whether a policy learned from one customer's data can generalize to the diverse operating conditions present across different households and utility regions. Table~\ref{tab:fleet_comparison} summarizes annual performance across the complete fleet.

The policy achieves an overall performance gap of +11.3\% across the 10-battery fleet while maintaining 100\% feasibility (with projection) and delivering 72× computational speedup. Without projection, the policy achieves nearly identical cost (+11.3\%) but violates feasibility on 457 days (12.5\%), demonstrating that projection is essential for cross-battery deployment. The computational advantage remains substantial where 485 seconds versus 35,011 seconds for PWL, or under 1 minute with projection disabled (52.7 seconds).

This represents reasonable generalization performance considering the policy was trained on only a single customer's data and must now operate across substantially different conditions: 6 SDG\&E customers with identical pricing but diverse household patterns, plus 4 batteries under entirely different utility tariffs with different peak period structures and export compensation ratios. Despite this variation, the policy maintains computational efficiency and demonstrates that cross-customer and cross-utility deployment is feasible with learned policies. The 11.3\% gap indicates room for improvement, motivating the next experiment, training on the full fleet to capture customer and utility diversity directly in the training data.

\subsection{Fleet-Wide Training}
\label{sec:fleet_training}

To address the generalization challenges observed with single-battery training, we now train the MI-DPC policy on the complete 3650 battery-days scenarios, incorporating data from all three utility regions and diverse customer profiles. This gives the network direct exposure to the full range of demand patterns, pricing structures, and operating conditions present in the deployment environment. Table~\ref{tab:fleet_comparison} reports performance across all 3650 battery-days. Fleet-wide training substantially improves performance over single-battery training, reducing the performance gap from $+11.3\%$ to $+3.33\%$, a $71\%$ reduction in gap magnitude. The policy achieves near-optimal performance across the entire fleet while maintaining 100\% feasibility even without projection, indicating that training on diverse data teaches the network to respect constraints naturally across different operating conditions.

\begin{table}[t]
\centering
\caption{Fleet Performance Comparison: Single-Battery vs. Fleet-Wide Training (3650 Total Battery-Days)}
\label{tab:fleet_comparison}
\small
\setlength{\tabcolsep}{3pt}
\begin{tabular}{lcccccc}
\toprule
\textbf{Method} & \textbf{Econ.} & \textbf{Degr.} & \textbf{Total} & \textbf{Gap} & \textbf{Feasible} & \textbf{Time (s)} \\
 & \textbf{(\$)} & \textbf{(\$)} & \textbf{(\$)} & \textbf{(\%)} & \textbf{(/3650)} & \textbf{(Speedup)} \\
\midrule
\multicolumn{7}{l}{\textit{Baseline}} \\
No Battery & 20{,}600 & 0 & 20{,}600 & --- & 3650 & --- \\
PWL & 14{,}691 & 1{,}310 & 16{,}000 & $0.00$ & 3650 & $\vcenter{\hbox{\shortstack{35011 \\ (1.0×)}}}$ \\
\midrule
\multicolumn{7}{l}{\textit{MI-DPC: Single-Battery Training}} \\
no proj. & 16{,}118 & 1{,}692 & 17{,}810 & $11.31$ & 3193 & $\vcenter{\hbox{\shortstack{52.7 \\ (664×)}}}$ \\
+ Proj. & 16{,}114 & 1{,}691 & 17{,}804 & $11.27$ & 3650 & $\vcenter{\hbox{\shortstack{485 \\ (72×)}}}$ \\
\midrule
\multicolumn{7}{l}{\textit{MI-DPC: Fleet-Wide Training}} \\
$\mathbf{no\ proj.}$ & $\mathbf{15{,}160}$ & $\mathbf{1{,}373}$ & $\mathbf{16{,}533}$ & $\mathbf{3.33}$ & \textbf{3650} & $\vcenter{\hbox{\shortstack{$\mathbf{21.9}$ \\ $\mathbf{(1598\times)}$}}}$ \\
$\mathbf{+\ Proj.}$ & $\mathbf{15{,}160}$ & $\mathbf{1{,}373}$ & $\mathbf{16{,}533}$ & $\mathbf{3.33}$ & \textbf{3650} & $\vcenter{\hbox{\shortstack{$\mathbf{62.0}$ \\ $\mathbf{(564\times)}$}}}$ \\
\bottomrule
\end{tabular}
\end{table}

The cost decomposition in Table~\ref{tab:fleet_comparison} clarifies the source of this gap. Of the $+3.33\%$ total, the economic component accounts for $2.93\%$ and degradation for only $0.39\%$; the proposed policy's degradation cost (\$1{,}373) is within $4.8\%$ of the PWL benchmark (\$1{,}310), a ratio of $1.05\times$. The gap therefore reflects a small reduction in arbitrage revenue rather than increased battery wear, confirming that the differentiable rainflow layer manages cycle aging effectively.

The computational advantage is pronounced. The policy attains a $564\times$ speedup with projection (62~s vs.\ 35{,}011~s for PWL across 3650 battery-days) and $1598\times$ without projection (22~s total), solving 3{,}650 distinct battery-day instances in under a minute of wall-clock time versus over nine hours for PWL. The identical costs with and without projection confirm that the learned policy generates naturally feasible solutions, with projection providing negligible cost benefit while guaranteeing constraint satisfaction.

These results show that incorporating multi-customer, multi-utility data during training is essential for consistent performance across heterogeneous fleets. The $3.33\%$ fleet-wide gap represents a practical deployment point, combining near-optimal operation with millisecond-scale inference that enables real-time control across all systems simultaneously.

\subsection{Feasibility Projection Analysis}
\label{sec:feasibility}
While the MI-DPC policy learns to respect SOC constraints through soft penalties during training, out-of-distribution scenarios can produce trajectories that violate physical bounds. For single-battery training (Section~\ref{sec:single_battery_training}), the policy achieved 3193/3650 feasible days (87.5\%), with 457 days violating SOC bounds by 0.31 kWh mean (2.3\% capacity), maximum 0.64 kWh. The projection filter resolved all violations, achieving 3650/3650 feasibility with negligible cost impact (\$17,810 without projection vs. \$17,804 with projection, \$6 difference total) and 51ms average solve time per infeasible day. Figure~\ref{fig:feasibility_projection} illustrates representative projection behavior. Overall, maintaining projection in deployment provides guaranteed constraint satisfaction at negligible computational and economic cost.

\begin{figure}[t]
\centering
\includegraphics[width=0.9\linewidth]{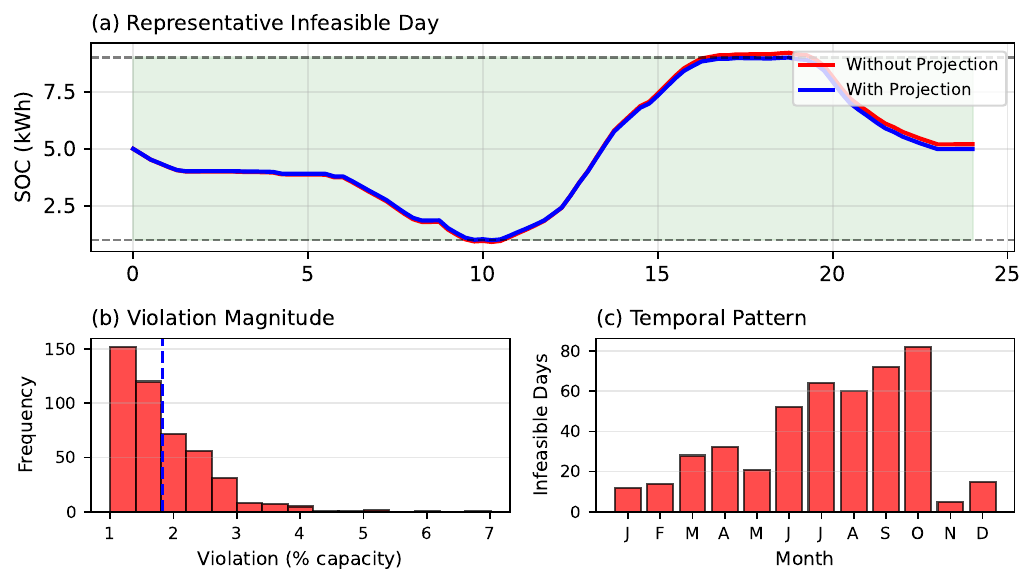}
\caption{Feasibility projection analysis. \textbf{(a)} Representative summer day showing SOC trajectory before and after projection. \textbf{(b)} Distribution of violation magnitudes across infeasible days. \textbf{(c)} Temporal pattern of violations.}
\label{fig:feasibility_projection}
\end{figure}

\subsection{Real-Time Fleet Aggregation}
\label{sec:aggregation}
The 3650 battery-days can be reinterpreted as a fleet-aggregation scenario requiring rapid dispatch decisions. Solving all instances with PWL requires roughly 9.7 hours sequentially, far exceeding any practical control window. MI-DPC solves each instance in a single per-battery forward pass on the same CPU, completing all 3650 in 62~s with projection (22~s without), well within real-time requirements. Because each schedule is produced independently, the approach scales naturally to larger fleets through parallel or batched execution.
\section{Conclusion}
\label{sec:conclusion}

This paper presents a Mixed-Integer Differentiable Predictive Control framework for battery energy storage dispatch that addresses the fundamental challenge of nondifferentiable rainflow cycle counting in degradation-aware optimization. The key technical contribution is a differentiable rainflow layer that enables end-to-end gradient-based training on true degradation physics rather than piecewise-linear approximations. Through careful architectural design, including Gumbel--Softmax mode selection, deterministic SOC unrolling, power balance rectification, and a convex safety projection with formal feasibility guarantees, the learned policy achieves feasible operation across diverse deployment conditions.

Experimental validation across 3{,}650 distinct battery-day instances, spanning a 10-battery fleet in three utility regions over 365 days, demonstrates practical viability. When trained and tested on a single battery, the policy achieves a $0.35\%$ performance gap relative to the PWL benchmark with over $200\times$ speedup and guaranteed feasibility via the safety projection filter. Deploying this policy to unseen households and utility regions degrades performance, as expected, but training on the full fleet resolves this generalization gap, achieving a $3.33\%$ gap across the entire fleet while maintaining full feasibility. The millisecond-scale inference makes coordination of large distributed battery fleets practical, a task infeasible with classical mixed-integer solvers.

Future work includes extending to receding-horizon control for intra-day re-optimization, incorporating online adaptation for non-stationary customer behavior, and integrating physics-based electrochemical degradation models beyond empirical stress functions.

\section*{Acknowledgements}
The authors gratefully acknowledge Arizona Public Service and sonnen, Inc., our utility and industry partners, for their collaboration and generous support in providing the data used in this work. This research was supported by the U.S. Department of Energy (DOE) under the BENEFIT program (Award No. DE-EE0010902), and by the DOE, Office of Science, ASCR program under the Scientific Discovery through Advanced Computing (SciDAC) Institute “LEADS: LEarning-Accelerated Domain Science”. 

\bibliographystyle{IEEEtran}
\bibliography{references}

\end{document}